\newtheorem{theorem}{Theorem}
\newtheorem{proposition}{Proposition}
\newtheorem{lemma}{Lemma}
\newcommand{\bbA}{{\bf A}}
\newcommand{\bbB}{{\bf B}}
\newcommand{\bbE}{{\bf E}}
\newcommand{\bbI}{{\bf I}}
\begin{document}

\title{On the Performance of Spectrum Sensing Algorithms using Multiple Antennas}

\author{Ying-Chang Liang$^*$, Guangming Pan$^\dag$, and Yonghong Zeng$^*$\\
* Institute for Infocomm Research, A*STAR, Singapore\\
 $\dag$ School of Physical and Mathematical Sciences, Nanyang
Technological University, Singapore.\\
Emails: ycliang@i2r.a-star.edu.sg, GMPAN@ntu.edu.sg,
yhzeng@i2r.a-star.edu.sg}

\maketitle
\begin{abstract}
In recent years, some spectrum sensing algorithms using multiple
antennas, such as the eigenvalue based detection (EBD), have
attracted a lot of attention. In this paper, we are interested in
deriving the asymptotic distributions of the test statistics of the
EBD algorithms. Two EBD algorithms using sample covariance matrices
are considered: maximum eigenvalue detection (MED) and condition
number detection (CND). The earlier studies usually assume that the
number of antennas $(K)$ and the number of samples $(N)$ are both
large, thus random matrix theory (RMT) can be used to derive the
asymptotic distributions of the maximum and minimum eigenvalues of
the sample covariance matrices. While assuming the number of
antennas being large simplifies the derivations, in practice, the
number of antennas equipped at a single secondary user is usually
small, say $2$ or $3$, and once designed, this antenna number is
fixed. Thus in this paper, our objective is to derive the asymptotic
distributions of the eigenvalues and condition numbers of the sample
covariance matrices for any fixed $K$ but large $N$, from which the
probability of detection and probability of false alarm can be
obtained. The proposed methodology can also be used to analyze the
performance of other EBD algorithms. Finally, computer simulations
are presented to validate the accuracy of the derived results.
\end{abstract}

\begin{keywords}
Spectrum Sensing, Cognitive Radio, Random Matrix Theory.
\end{keywords}

\IEEEpeerreviewmaketitle

\section{Introduction}
\label{sec:intro}

Spectrum sensing is one of the key elements in opportunistic spectrum access design \cite{Haykin2005}, \cite{Liang08}.
In literature, there are a few main categories of practical spectrum
sensing schemes for OSA including\footnote{Matched-filtering method
requires perfect knowledge of the PU's signal received at SU, which
is almost impossible in cognitive radio scenario. Thus
matched-filtering method is not treated a practical spectrum sensing
scheme here.}: energy detection, cyclostationarity based detection,
and eigenvalue-based detection (EBD); see, e.g. \cite{Yonghong10}
and references therein. Energy detection is simple for
implementation, however, it requires accurate noise power
information, and a small error in that estimation may cause SNR wall
and high probability of false alarm \cite{Tandra08}. For
cyclostationarity based detection, the cyclic frequency of PU's
signal needs to be acquired \emph{a priori}. On the other hand, EBD
\cite{Zeng_Liang_TCOM09}, first proposed
and submitted to IEEE 802.22 working group
\cite{Zeng_Liang_802_06-1}, performs signal detection by estimating
the signal and noise powers simultaneously, thus it does not require
the cyclic knowledge of the PU and is robust to noise power
uncertainty.

In general, two steps are needed to complete a sensing design: (1)
to design the test statistics; and (2) to derive the probability
density function (PDF) of the test statistics. For spectrum sensing
with multiple antennas, the test statistics can be designed based on
the standard principles such as generalized likelihood ratio testing
(GLRT) \cite{Zhang_GLRT}, or other considerations \cite{Zeng_Liang_TCOM09}, \cite{Zeng_Koh_Liang_ICC2008},
\cite{Cardoso_Debbah_ISWPC08},
\cite{Penna_Garello_COMLetter09}.
Surprisingly these studies all give the test statistics using the
eigenvalues of the sample covariance matrix. It is thus important to derive the PDF of the eigenvalues so that the sensing performance can be quantified.

For EBD schemes, the PDF of the test statistics is usually
derived using random matrix theory (RMT); see, e.g., \cite{Zeng_Liang_TCOM09}, \cite{Zeng_Koh_Liang_ICC2008}, \cite{Cardoso_Debbah_ISWPC08},
\cite{Penna_Garello_COMLetter09}. In fact, the maximum and minimum eigenvalues of the sample covariance matrix
have simple explicit expressions when both antenna number ($K$) and ample size ($N$) are large
\cite{Johnstone2001,Bai_Fang_Liang_book}.
It is reasonable to assume that $N$ is large especially when the
secondary user is required to sense a weak primary signal, in
practice, however, the number of antennas equipped at a single
secondary user is usually small, say $2$ or $3$. Thus the results
obtained under the assumption that both $K$ and $N$ are large may
not be accurate for practical multi-antenna cognitive radios. In
this paper, our objective is to derive the asymptotic distributions
of the eigenvalues of the sample covariance matrices for arbitrary
$K$ but large $N$. The asymptotic results obtained form the basis
for quantifying the PDF of the test statistics for EBD algorithms.

It is noticed that there are studies on the exact distribution of
the condition number of sample covariance matrices for arbitrary $K$
and $N$ \cite{Penna_CrownCom09}, the formulas derived however are
complex and cannot be conveniently used to analyze the sensing
performance. Furthermore, there are no results published for the
case when the primary signals exist.

The rest of the paper is organized as follows.
Section~\ref{sec:spectrum sensing} presents the system model for
spectrum sensing using multiple antennas. Two EBD algorithms are
reviewed in Section \ref{sec:EBD}, including maximum eigenvalue
detection (MED) and condition number detection (CND) algorithms. In
Section \ref{sec:asym-1}, we derive the asymptotic distributions of
the test statistics of the two EBD algorithms for the scenario when
the primary users are inactive. In Section \ref{sec:asym-2}, the
results are derived for the scenario when there are active primary
users in the sensed band. Performance evaluations are given in
Section \ref{sec:evaluations}, and finally, conclusions are drawn in
Section \ref{sec:conclusions}.

The following notations are used in this paper. Matrices and vectors
are typefaced using bold uppercase and lowercase letters,
respectively. Conjugate transpose of matrix $\bbA$ is denoted as
$\bbA^{H}$. $\| {\bf a} \|$ stands for the norm of vector ${\bf a}$.
$\mathbb{E}[\cdot]$ denotes expectation operation; $\stackrel{D}
{\longrightarrow}$, $\stackrel{a.s.} {\longrightarrow}$ and
$\stackrel{\Delta}{=}$ stand for ``convergence in distribution'',
``almost surely convergence''; and ``defined as'', respectively.

\section{System Model}
\label{sec:spectrum sensing}

Suppose the SU is equipped with $K$ antennas, which are all used to
sense a radio spectrum for collecting $N$ samples each. One of the following two scenarios
happens.

{\bf Scenario 1 ($\mathcal{S}_{1}$):} There are $t~(t \geq 1)$
active PU transmissions in the sensed band, and the sampled outputs,
${\bf x}(n)$, $n=0,\cdots,N-1$, can be represented as
\begin{eqnarray}
{\bf x}(n) = {\bf H} {\bf s} (n) + {\bf u} (n), \label{eq:H1}
\end{eqnarray}
where ${\bf u}(n)$ is the $(K \times 1)$ noise vector, ${\bf s}(n)$
is the $(t \times 1)$ signal vector containing the transmitted
signals from the active PUs, and ${\bf H} = [{\bf h}_{1}, \cdots,
{\bf h}_{t} ]$ is the $(K \times t)$ channel matrix from the active
PUs to the SU.

{\bf Scenario 0 ($\mathcal{S}_{0}$):} There are no
active PUs in the sensed band, thus the sampled outputs collected at
SU are given by
\begin{eqnarray}
{\bf x} (n) = {\bf u}(n). \label{eq:H0}
\end{eqnarray}

We make the following assumptions for the above models:

\begin{itemize}

\item[(i)] The noises ${\bf u}(n)$ are independent and
identically distributed (iid) both spatially and temporally. Each
element follows Gaussian distribution with mean zero and variance
$\sigma_{u}^{2}$.

\item[(ii)] For each $n$, the elements of ${\bf s}(n)$ are iid, and follow Gaussian distribution with mean zero and variance $\sigma_{s}^{2}$. Thus the covariance matrix of ${\bf s}(n)$ is ${\bf R}_{s} \stackrel{\Delta}{=} \mathbb{E} [{\bf s}(n) {\bf s}^{H}(n)] = \sigma_{s}^{2} {\bf I}$.

\item[(iii)] ${\bf s}(n)$ are ${\bf u}(n)$ are independent of each other.

\item[(iv)] The eigenvalues of ${\bf H} {\bf H}^{H}$ are not all identical.

\end{itemize}

Two cases of Gaussian distribution are considered: (1)
\emph{Real-valued case:} we denote ${\bf u}(n) \sim \mathcal{N}
({\bf 0}, \sigma_{u}^{2} {\bf I})$ and ${\bf s}(n) \sim \mathcal{N}
({\bf 0}, \sigma_{s}^{2} {\bf I})$; (2) \emph{Complex-valued case:}
We denote ${\bf u}(n) \sim \mathcal{CN} ({\bf 0}, \sigma_{u}^{2}
{\bf I})$ and ${\bf s}(n) \sim \mathcal{CN} ({\bf 0}, \sigma_{s}^{2}
{\bf I})$;

When there are active PUs, we define the average received
signal-to-noise ratio (SNR) of PUs' signals measured at SU as
: $\mbox{SNR} = \frac{\sum_{\ell=1}^{t} \| {\bf h}_{\ell} \|^{2}
\sigma_{s}^{2}}{K \sigma_{u}^{2}}$. Given the received samples, ${\bf X} = [{\bf
x}(0), {\bf x}(1), \cdots, {\bf x}(N-1)]$, the objective of spectrum sensing design is to choose one of the two
hypothesis: $\mathcal{H}_{0}$: there are no active primary users in
the sensed band, and $\mathcal{H}_{1}$: there exist active primary
users. For that, we need to design the test static, $T( {\bf X})$,
and a threshold $\epsilon$, and infer $\mathcal{H}_{1}$ if $T( {\bf
X}) > \epsilon$, and infer $\mathcal{H}_{0}$ if $T( {\bf X}) \leq
\epsilon$.

\section{Eigenvalue based Detections}
\label{sec:EBD}

Let us define the \emph{sample covariance matrix} and
\emph{covariance matrix} of the measurements ${\bf x}(n)$ of the
sensed band as
\begin{eqnarray}
\hat{\bf R}_{x} & \stackrel{\Delta}{=} & \frac{1}{N} {\bf X} {\bf X}^{H} = \frac{1}{N}\sum_{n=0}^{N-1} {\bf x}(n) {\bf x}^{H} (n), \label{eq:Rmx} \\
{\bf R}_{x} & \stackrel{\Delta}{=} & \mathbb{E} [{\bf x}(n) {\bf
x}^{H} (n)], \label{eq:Rmx1}
\end{eqnarray}
respectively. For a fixed $K$, under Scenario $\mathcal{S}_{0}$ and
when $N \to \infty$, we have
\begin{eqnarray}
\hat{\bf R}_{x} \to {\bf R}_{x} = \sigma_{u}^{2} {\bf I},
\end{eqnarray}
which means that all the eigenvalues of the \emph{covariance matrix}
are equal to $\sigma_{u}^{2}$. However, under $\mathcal{S}_{1}$ and
when $N \to \infty$, $\hat{\bf R}_{x}$ approaches
\begin{eqnarray}
{\bf R}_{x} = \sigma_{s}^{2} {\bf H} {\bf H}^{H} + \sigma_{u}^{2}
{\bf I}. \label{eq:Rx1}
\end{eqnarray}
Based on assumption (iv), the eigenvalues of \eqref{eq:Rx1} can be
ordered as $\rho_{1} \geq \rho_{2} \geq \cdots \geq \rho_{K}
\geq \sigma_{u}^{2}$.

Denote the ordered eigenvalues of the sample covariance matrix
$\hat{\bf R}_{x}$ as $\hat{\lambda}_{1} \geq \hat{\lambda}_{2} \geq
\cdots \geq \hat{\lambda}_{K}$. We consider the following two EBD algorithms.

\begin{itemize}

\item[(1)] {\bf Maximum Eigenvalue Detection (MED)}:
For MED, the test static is chosen as \cite{Zeng_Koh_Liang_ICC2008}:
\begin{eqnarray}
T^{(m)}({\bf X}) = \frac{\hat{\lambda}_{1}}{\sigma_{u}^{2}}.
\end{eqnarray}

\item[(2)] {\bf Condition Number Detection (CND)}: The CND chooses the following test statistic \cite{Zeng_Liang_TCOM09}:
\begin{eqnarray}
T^{(c)}({\bf X}) = \frac{\hat{\lambda}_{1}}{\hat{\lambda}_{K}}.
\end{eqnarray}
\end{itemize}

An essential task to complete the sensing design is to determine the
test threshold, which affects both probability of detection and
probability of false alarm. To do so, it is important to derive the
PDF of the test statistics.

For arbitrary $(K,N)$ pair, the closed-form expressions of the PDF
of the test statics are in general complex \cite{Penna_CrownCom09}.
In practice, the primary users need to be detectable in low SNR
environment. For example, in IEEE 802.22, the TV signal needs to be
detected at $-20$dB SNR with $90 \%$ target probability of
detection and $10 \%$ target probability of
false alarm. To achieve that, the number of samples, $N$, required for
spectrum sensing is usually very large. In the paper, we thus turn
our attention to derive the PDFs of the test statistics for any
fixed $K$, but large $N$. The asymptotic distributions (when $N \to
\infty$) of the test statics of the EBD algorithms will be derived
for $\mathcal{S}_{0}$ and $\mathcal{S}_{1}$ scenarios, respectively.

\section{Asymptotic Distributions of the Test Statistics under Scenario $\mathcal{S}_{0}$}
\label{sec:asym-1}

Let $\lambda_{i} = \frac{\hat{\lambda}_{i}}{\sigma_{u}^2}$ for
$i=1,\cdots,K$, and denote $\bbA = \mbox{diag} \left \{ \lambda_{1},
\lambda_{2}, \cdots, \lambda_{K} \right \}$. Define $\beta_{i} =
\sqrt{N} (\lambda_i -1)$ for $i=1,\cdots,K$, and $\bbB = \mbox{diag}
\left \{\beta_{1}, \beta_{2}, \cdots, \beta_{K} \right \}$. Note
that $\beta_{1} \geq \beta_{2} \geq \cdots \geq \beta_{K}$.

Under $\mathcal{S}_{0}$, by Theorem 1 and (2.12) in \cite{an1}, we
have the following proposition.

\begin{proposition}
\label{Prop:P1} For real-valued case, when $N \to \infty$, the limiting distribution of
$\bbB$, $g_K(\beta_{1}, \beta_{2}, \cdots, \beta_{K})$, is given by
\begin{eqnarray}
&& g_K(\beta_1,\cdots,\beta_K)\nonumber \\
& = &C_1(K) \exp \left (-\frac{1}{4}\sum\limits_{i=1}^K \beta_i^2
\right )\prod_{1 \leq i<j \leq K}(\beta_i-\beta_j), \label{eq:a1}
\end{eqnarray}
where
\begin{eqnarray}
C_1(K)=\frac{1}{2^{K(K+3)/4}\prod_{i=1}^K\Gamma[\frac{1}{2}(K+1-i)]}.
\end{eqnarray}
Here $\Gamma(z) = \int_{0}^{\infty} t^{z-1} e^{-t} dt$ is the Gamma function.
\end{proposition}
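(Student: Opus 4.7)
The strategy is to start from the exact joint density of the eigenvalues of a real Wishart matrix, perform the rescaling $\beta_i=\sqrt{N}(\lambda_i-1)$, and pass to the large-$N$ limit by a Taylor expansion inside the density together with Stirling's formula for the normalizing constant.

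First I would observe that under $\mathcal{S}_0$ the matrix $\bbX/\si_u$ has iid $\mathcal{N}(0,1)$ entries, so $\bbX\bbX^H/\si_u^2$ is a real Wishart $W_K(N,\bbI)$. Writing $\ell_i \stackrel{\Delta}{=} N\lambda_i$, the ordered eigenvalues $\ell_1>\cdots>\ell_K$ therefore admit the classical James joint density
\begin{equation*}
f_N(\ell)=c_{K,N}\prod_{1\le i<j\le K}(\ell_i-\ell_j)\prod_{i=1}^K \ell_i^{(N-K-1)/2}e^{-\ell_i/2},
\end{equation*}
where $c_{K,N}$ is the standard Wishart constant, expressible purely in terms of $K$, $N$, and gamma values at $(N+1-i)/2$ and $(K+1-i)/2$.

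Next I would substitute $\ell_i=N+\sqrt{N}\beta_i$ (so $d\ell_i=\sqrt{N}\,d\beta_i$) and track each factor in turn. The Vandermonde contributes $N^{K(K-1)/4}\prod_{i<j}(\beta_i-\beta_j)$ and the Jacobian contributes $N^{K/2}$. The nontrivial factor $\prod_i\ell_i^{(N-K-1)/2}e^{-\ell_i/2}$ is handled in logarithms: expanding $\log(1+\beta_i/\sqrt{N})$ to second order, the $\sqrt{N}\beta_i/2$ produced by the power cancels exactly with the $-\sqrt{N}\beta_i/2$ from the exponential, and the surviving $O(1)$ contribution is the Gaussian weight $-\beta_i^2/4$. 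After combining, the $\beta$-dependent piece of the density is $\exp\bigl(-\sum_i\beta_i^2/4\bigr)\prod_{i<j}(\beta_i-\beta_j)$ up to an $N$-dependent prefactor; this already matches the functional form of \eqref{eq:a1}.

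The remaining task is to show that the residual $N$-dependent prefactor $c_{K,N}\,e^{-NK/2}\,N^{KN/2-K(K+1)/4}$ converges to the stated constant $C_1(K)$, and this is the main obstacle. The factors $2^{NK/2}$ in $c_{K,N}$, $e^{-NK/2}$, and $N^{KN/2}$ each diverge individually, and a delicate cancellation must be established by applying Stirling's expansion $\Ga(x)\sim\sqrt{2\pi}\,x^{x-1/2}e^{-x}$ to every gamma value of the form $\Ga((N+1-i)/2)$ inside $c_{K,N}$. A careful bookkeeping of the exponentials, the powers of $N$ and $2$, and the $\sqrt{2\pi}$ factors leaves exactly $2^{-K(K+3)/4}\prod_{i=1}^K\Ga((K+1-i)/2)^{-1}$, which is $C_1(K)$. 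A routine dominated-convergence step, using the Gaussian-times-Vandermonde as the limiting envelope on bounded boxes in $\beta$-space, then upgrades the pointwise density convergence to convergence in distribution of $\bbB$. This is precisely the computation performed in Theorem 1 and equation (2.12) of \cite{an1}, from which the proposition is quoted.
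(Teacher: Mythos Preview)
Your proposal is correct and follows exactly the route the paper takes: the paper does not give an independent proof but simply invokes Theorem~1 and equation~(2.12) of Anderson~\cite{an1}, and your sketch reproduces Anderson's classical argument (exact Wishart eigenvalue density, rescaling, second-order Taylor expansion of the log-potential, Stirling cancellation of the constants). Your write-up is in fact more detailed than the paper itself, which contents itself with the citation.
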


For real-valued case, and when $K = 1$, we have $\beta_{1} \sim \mathcal{N}
(0,2)$, i.e., $g_{1}(\beta_{1}) =\frac{1}{2 \sqrt{\pi}} e^{-\frac{\beta_{1}^2}{4}}$. When $K = 2$, we have
\begin{equation}
g_2(\beta_1,\beta_2)=\frac{1}{2^{5/2}\sqrt{\pi}}(\beta_1-\beta_2)e^{-\frac{\beta_1^2+\beta_2^2}{4}},\quad
\beta_1\geq \beta_2.
\end{equation}

For \emph{complex-valued case}, by Lindberg's central limit theorem
\cite{Petrov_book}, $\sqrt{N} (\frac{\hat{\bf
R}_{x}}{\sigma_{u}^{2}} - {\bf I})$ converges to a Hermitian matrix
with elements above diagonal being complex Gaussian distribution, a
so-called Gaussian unitary matrix. Thus, according to the joint
density of the eigenvalues of a Gaussian unitary matrix, similar to
Theorem 1 in \cite{an1} we have the following:

\begin{proposition}
\label{Prop:P2} For complex-valued case, when $N \to \infty$, the limiting distributions
of $\bbB$, $g_K(\beta_{1}, \beta_{2}, \cdots, \beta_{K})$, is given
by
\begin{eqnarray}
&& g_K(\beta_1,\cdots,\beta_K) \nonumber \\
& = & C_2(K) \exp \left (-\frac{1}{2}\sum\limits_{i=1}^K\beta_i^2
\right )\prod_{1 \leq i<j \leq K}(\beta_i-\beta_j)^2, \label{eq:a2}
\end{eqnarray}
where
\begin{eqnarray}
C_2(K)=K!(2\pi)^{-K/2}\prod_{j=1}^K\frac{1}{\Gamma(1+j)}.
\end{eqnarray}
\end{proposition}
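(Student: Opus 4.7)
The plan is to reduce Proposition~\ref{Prop:P2} to the classical joint-eigenvalue density of the Gaussian Unitary Ensemble (GUE), mirroring the real-valued argument that underlies Proposition~\ref{Prop:P1}. First I would analyze the Hermitian matrix $\mathbf{M}_N := \sqrt{N}\,(\hat{\mathbf{R}}_x/\sigma_u^2 - \mathbf{I})$ entrywise. Under $\mathcal{S}_0$, $\mathbf{x}(n)=\mathbf{u}(n)\sim\mathcal{CN}(\mathbf{0},\sigma_u^2\mathbf{I})$, so the diagonal entry is $(\mathbf{M}_N)_{ii} = \frac{1}{\sqrt{N}\sigma_u^2}\sum_{n=0}^{N-1}(|u_i(n)|^2-\sigma_u^2)$ and, for $i\neq j$, $(\mathbf{M}_N)_{ij} = \frac{1}{\sqrt{N}\sigma_u^2}\sum_{n=0}^{N-1} u_i(n)\overline{u_j(n)}$. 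Using $\mathbb{E}|u|^4 = 2\sigma_u^4$ and $\mathbb{E}u^2=0$ for circularly symmetric complex Gaussian $u$, a direct computation yields $\mathrm{Var}((\mathbf{M}_N)_{ii}) = 1$, and for $i<j$ the real and imaginary parts of $(\mathbf{M}_N)_{ij}$ are uncorrelated with variance $1/2$ each; moreover distinct entries on and above the diagonal are uncorrelated.

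Next, by Lindeberg's CLT applied jointly to the finite family of these real and imaginary coordinates (with Hermitian symmetry automatically enforcing the below-diagonal entries), $\mathbf{M}_N \stackrel{D}{\longrightarrow}\mathbf{G}$, where $\mathbf{G}$ is a $K\times K$ matrix drawn from the standard GUE, i.e.\ a Hermitian random matrix whose distribution has density proportional to $\exp(-\tfrac{1}{2}\mathrm{tr}(\mathbf{G}^2))$. Since the ordered-eigenvalue map is continuous on the space of Hermitian matrices (by Weyl's inequality), the continuous mapping theorem gives $(\beta_1,\dots,\beta_K)\stackrel{D}{\longrightarrow}(\lambda_1(\mathbf{G}),\dots,\lambda_K(\mathbf{G}))$.

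Finally, I would invoke the well-known joint density of the ordered eigenvalues of a standard GUE matrix: its unordered density is
\begin{equation*}
\frac{1}{(2\pi)^{K/2}\prod_{j=1}^K j!}\exp\Bigl(-\tfrac{1}{2}\sum_{i=1}^K \lambda_i^2\Bigr)\prod_{1\le i<j\le K}(\lambda_i-\lambda_j)^2,
\end{equation*}
and multiplying by $K!$ to pass to the ordered density, together with $j! = \Gamma(1+j)$, produces exactly the constant $C_2(K)$ and the form \eqref{eq:a2}.

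The main obstacle is the rigorous passage from matrix-entry convergence to joint eigenvalue convergence; this is standard given the continuity of the ordered-eigenvalue map and the joint weak convergence of all matrix entries, but must be stated carefully. The remaining ingredients---checking the Lindeberg condition (trivial for Gaussian summands), identifying the limit moments as those of the GUE, and matching the normalization constant---are essentially bookkeeping.
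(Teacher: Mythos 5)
Your proposal follows essentially the same route as the paper: the paper's own justification is precisely that Lindeberg's CLT sends $\sqrt{N}(\hat{\bf R}_{x}/\sigma_{u}^{2}-{\bf I})$ to a Gaussian unitary (GUE) matrix, after which the known joint eigenvalue density of the GUE is invoked, in parallel with Anderson's Theorem~1 for the real case. Your version simply fills in the details the paper leaves implicit (the entrywise variance computation, the continuous-mapping step for the ordered eigenvalues, and the Mehta normalization constant), and these are all correct.
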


For complex-valued case, when $K = 1$, we have $g(\beta_{1})=\frac{1}{\sqrt{2 \pi}}
e^{-\frac{\beta_{1}^2}{2}}$, i.e., $\beta_{1} \sim \mathcal{N}
(0,1)$. When $K=2$, we have
\begin{equation}
g_2(\beta_1,\beta_2)=\frac{1}{2\pi}(\beta_1-\beta_2)^2e^{-\frac{\beta_1^2+\beta_2^2}{2}},\quad
\beta_1\geq \beta_2.
\end{equation}

For a given $K$, the limiting distribution, $\bar{g}_{K,i}(x)$, of $\beta_{i}, i=1,\cdots,K$, can be calculated from \eqref{eq:a1} or \eqref{eq:a2}.

\subsection{MED}

For MED, $T^{(m)}({\bf X}) =
\frac{\hat{\lambda}_{1}}{\sigma_{u}^{2}} = \lambda_{1}$, thus
$\sqrt{N} (T^{(m)}({\bf X}) -1) = \beta_{1}$. We have the following:

\begin{theorem}
Under $\mathcal{S}_{0}$, for a fixed $K$ and when $N \to \infty$,
\begin{equation} \label{a1} \sqrt{N} \left ( T^{(m)}({\bf X}) -1
\right )\stackrel{D}\longrightarrow m_{K}^{(0)},
\end{equation}
whose distribution, $f_{K}^{(m)}(x)$, is the same as the limiting distribution of
$\beta_{1}$, $\bar{g}_{K,1}(x)$.
\end{theorem}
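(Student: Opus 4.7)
The statement is essentially a bookkeeping consequence of the definitions together with Propositions~\ref{Prop:P1} and~\ref{Prop:P2}, so the plan is short. First I would unwind the definitions: since $T^{(m)}(\mathbf{X}) = \hat{\lambda}_1/\sigma_u^2 = \lambda_1$ and $\beta_1 = \sqrt{N}(\lambda_1 - 1)$, one has the pointwise identity
\begin{equation*}
\sqrt{N}\bigl(T^{(m)}(\mathbf{X}) - 1\bigr) = \beta_1
\end{equation*}
for every realization of $\mathbf{X}$. So the claim reduces to identifying the limiting marginal distribution of $\beta_1$.

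Next I would invoke Proposition~\ref{Prop:P1} (real-valued case) or Proposition~\ref{Prop:P2} (complex-valued case) to assert that the full vector $(\beta_1,\dots,\beta_K)$ converges in distribution to a random vector with joint density $g_K$ on the ordered cone $\{\beta_1 \geq \beta_2 \geq \cdots \geq \beta_K\}$. Applying the continuous mapping theorem to the projection $(\beta_1,\dots,\beta_K) \mapsto \beta_1$ (or, equivalently, passing to the marginal), I conclude that $\beta_1$ converges in distribution to the random variable $m_K^{(0)}$ whose density is the first marginal of $g_K$, namely
\begin{equation*}
\bar{g}_{K,1}(x) = \int_{x \geq \beta_2 \geq \cdots \geq \beta_K} g_K(x,\beta_2,\dots,\beta_K)\, d\beta_2 \cdots d\beta_K,
\end{equation*}
which by definition is $f_K^{(m)}(x)$.

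There is really no obstacle in the argument itself; the only point requiring any care is to remember that $g_K$ in \eqref{eq:a1}--\eqref{eq:a2} is supported on the ordered simplex, so the marginalization integral for $\bar{g}_{K,1}$ must be carried out over $\{\beta_2 \geq \cdots \geq \beta_K,\ \beta_2 \leq x\}$ rather than all of $\mathbb{R}^{K-1}$. In the low-dimensional cases $K=1,2$ this reduces to the explicit Gaussian and Gaussian-times-polynomial formulas already displayed after the propositions, which provides a useful sanity check.
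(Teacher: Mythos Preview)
Your argument is correct and is exactly what the paper does: the identity $\sqrt{N}(T^{(m)}(\mathbf{X})-1)=\beta_1$ is noted immediately before the theorem, and the conclusion is then read off from the marginal of the joint limit in Propositions~\ref{Prop:P1}--\ref{Prop:P2}. The paper gives no further proof, so your write-up is if anything slightly more detailed than the original.
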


\subsection{CND}

For CND, $T^{(c)}({\bf X}) =
\frac{\hat{\lambda}_{1}}{\hat{\lambda}_{K}} =
\frac{\lambda_{1}}{\lambda_{K}}$. The following theorem states the
limiting distribution of the condition number of the sample
covariance matrix.

\begin{theorem}
Under $\mathcal{S}_{0}$, for a fixed $K$ and when $N \to \infty$,
\begin{equation} \label{a1} \sqrt{N} \left (T^{(c)}({\bf X}) -1
\right )\stackrel{D}\longrightarrow z_{K}^{(0)},
\end{equation}
whose PDF and cumulative density function (CDF) are given by
\begin{eqnarray}
f_K^{(c)}(x) & = & \int^{\infty}_{-\infty} \tilde{g}_K(\beta_K+
x,\beta_K)d\beta_K, \label{eq:PDF_K}\\
F_K^{(c)}(x) & = & \int^{\infty}_{-\infty}\int_{\beta_K}^{\beta_K+
x} \tilde{g}_K(\beta_1,\beta_K)d\beta_1d\beta_K, \label{eq:CDF_K}
\end{eqnarray}
respectively, where $\tilde{g}_K(\beta_1,\beta_K)$ is the joint
distribution of $\beta_{1}$ and $\beta_{K}$.
\end{theorem}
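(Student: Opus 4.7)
The plan is to reduce the condition number statistic to a simple algebraic combination of $\beta_1$ and $\beta_K$, apply Slutsky's theorem to obtain the limiting distribution, and then extract the PDF/CDF by a straightforward change of variables.

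First, I would rewrite the statistic in terms of the normalized eigenvalues. Since $T^{(c)}(\bbX) = \lambda_1/\lambda_K = (1+\beta_1/\sqrt{N})/(1+\beta_K/\sqrt{N})$, a direct manipulation gives
\begin{equation}
\sqrt{N}\bigl(T^{(c)}(\bbX) - 1\bigr) \;=\; \frac{\beta_1 - \beta_K}{1 + \beta_K/\sqrt{N}}.
\end{equation}
The numerator $\beta_1 - \beta_K$ is a continuous function of the vector $(\beta_1,\ldots,\beta_K)$, which by Propositions~\ref{Prop:P1} and~\ref{Prop:P2} converges in distribution to the joint limit with density $g_K$; in particular, $(\beta_1,\beta_K)$ converges jointly to the bivariate marginal with density $\tilde{g}_K(\beta_1,\beta_K)$.

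Next, I would handle the denominator. Because $\beta_K$ converges in distribution, $\beta_K/\sqrt{N}\stackrel{P}{\longrightarrow} 0$, so $1+\beta_K/\sqrt{N}\stackrel{P}{\longrightarrow} 1$. By the continuous mapping theorem applied jointly and Slutsky's theorem,
\begin{equation}
\sqrt{N}\bigl(T^{(c)}(\bbX) - 1\bigr) \stackrel{D}{\longrightarrow} \beta_1 - \beta_K \;\stackrel{\Delta}{=}\; z_K^{(0)},
\end{equation}
where $(\beta_1,\beta_K)$ on the right now denotes the limiting pair with joint density $\tilde{g}_K$.

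Finally, I would obtain the PDF/CDF by a change of variables. Setting $x = \beta_1 - \beta_K$ and keeping $\beta_K$ as the auxiliary variable, the Jacobian is unity, so marginalizing out $\beta_K$ yields
\begin{equation}
f_K^{(c)}(x) \;=\; \int_{-\infty}^{\infty} \tilde{g}_K(\beta_K + x,\beta_K)\,d\beta_K,
\end{equation}
and integrating the joint density over the region $\{\beta_1 - \beta_K \le x,\ \beta_1 \ge \beta_K\}$ gives the stated CDF. The only technical point to verify is the Slutsky step — i.e.\ that the joint convergence of $(\beta_1,\beta_K)$ combined with $\beta_K/\sqrt{N}\to 0$ truly transfers to the ratio — but this is routine given the propositions already established, so no real obstacle arises.
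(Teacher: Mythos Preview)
Your argument is correct and is the natural one: the algebraic identity $\sqrt{N}(T^{(c)}(\bbX)-1)=(\beta_1-\beta_K)/(1+\beta_K/\sqrt{N})$ together with Slutsky's theorem and the joint convergence of $(\beta_1,\ldots,\beta_K)$ from Propositions~\ref{Prop:P1}--\ref{Prop:P2} immediately yields $z_K^{(0)}=\beta_1-\beta_K$, and the stated PDF/CDF then follow by the unit-Jacobian substitution $(\beta_1,\beta_K)\mapsto(x,\beta_K)$ with $x=\beta_1-\beta_K$. The paper itself does not give an in-text proof of this theorem---it simply refers the reader to the companion manuscript \cite{Liang_Pan_zeng10}---so there is no detailed argument here to compare against; your derivation is exactly the standard route one would expect that reference to take.
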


\begin{proof}
See \cite{Liang_Pan_zeng10}.
\end{proof}

Next, we derive the closed-form expressions of the limiting
distributions of the condition number of the sample covariance
matrix for $K = 2$ case.

\begin{lemma}
\label{lemma-1} Consider Scenario $\mathcal{S}_{0}$. For $K=2$ and
$N \to \infty$,
\begin{equation}
\label{a1} \sqrt{N} \left ( T^{(c)}({\bf X}) -1 \right
)\stackrel{D}\longrightarrow z_{2}^{(0)},
\end{equation}
whose PDF is
\begin{eqnarray}
f_{2}^{(c)}(x) = \frac{1}{4}x \exp(-x^2/8),\qquad x\geq 0,
\end{eqnarray}
for real-valued case, and
\begin{eqnarray}
f_{2}^{(c)}(x)=\frac{1}{2\sqrt{\pi}}x^2e^{-x^2/4},\qquad x\geq 0.
\end{eqnarray}
for complex-valued case.
\end{lemma}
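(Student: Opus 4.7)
The plan is to specialize Theorem 2 to $K=2$ and then evaluate the resulting one-dimensional integral in closed form. Specifically, since for $K=2$ the pair $(\beta_1,\beta_K)=(\beta_1,\beta_2)$ is the full eigenvalue vector, the joint density $\tilde g_2(\beta_1,\beta_2)$ coincides with the density $g_2$ already written out explicitly in \eqref{eq:a1} and the display following Proposition 2 for the real- and complex-valued cases, respectively. So all I need is to substitute these densities into \eqref{eq:PDF_K} and compute the integral.

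First I would sanity-check that the limiting variable $z_2^{(0)}$ is indeed $\beta_1-\beta_2$. Writing $\lambda_i=1+\beta_i/\sqrt N$, we have
\begin{equation*}
\sqrt N\bigl(T^{(c)}(\bbX)-1\bigr)=\sqrt N\cdot\frac{\lambda_1-\lambda_2}{\lambda_2}=\frac{\beta_1-\beta_2}{1+\beta_2/\sqrt N},
\end{equation*}
which converges in distribution to $\beta_1-\beta_2$ by Propositions 1 and 2 together with Slutsky's theorem; formula \eqref{eq:PDF_K} is exactly the density of this difference obtained by the substitution $\beta_1=\beta_2+x$.

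Next, for the real-valued case I would substitute the density $g_2(\beta_2+x,\beta_2)=\frac{1}{2^{5/2}\sqrt\pi}\,x\,\exp\!\bigl[-\tfrac14((\beta_2+x)^2+\beta_2^2)\bigr]$ into \eqref{eq:PDF_K}, complete the square
\begin{equation*}
(\beta_2+x)^2+\beta_2^2=2\bigl(\beta_2+x/2\bigr)^2+x^2/2,
\end{equation*}
pull the factor $e^{-x^2/8}$ outside the integral, and evaluate the resulting Gaussian integral $\int_{-\infty}^{\infty}e^{-(\beta_2+x/2)^2/2}d\beta_2=\sqrt{2\pi}$. Collecting constants yields $f_2^{(c)}(x)=\tfrac14\,x\,e^{-x^2/8}$ for $x\geq 0$. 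The complex case is identical in structure: substitute $g_2(\beta_2+x,\beta_2)=\frac{1}{2\pi}x^2\exp\!\bigl[-\tfrac12((\beta_2+x)^2+\beta_2^2)\bigr]$, use the same completing-the-square identity, and evaluate $\int_{-\infty}^{\infty}e^{-(\beta_2+x/2)^2}d\beta_2=\sqrt\pi$ to obtain $f_2^{(c)}(x)=\frac{1}{2\sqrt\pi}\,x^2\,e^{-x^2/4}$.

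There is really no substantive obstacle here — the Gaussian integrals collapse after one change of variables, and the support restriction $x\geq 0$ follows automatically from the ordering $\beta_1\geq\beta_2$ built into $g_2$. The only thing to be careful about is arithmetic with the normalizing constants ($2^{5/2}\sqrt\pi$ in the real case, $2\pi$ in the complex case), since these must combine cleanly with $\sqrt{2\pi}$ and $\sqrt\pi$ to produce the stated coefficients $\tfrac14$ and $\tfrac{1}{2\sqrt\pi}$.
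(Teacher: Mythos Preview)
Your proposal is correct and follows exactly the route the paper intends: Lemma~1 is presented immediately after Theorem~2 as its $K=2$ specialization, and the paper does not spell out the integral computation, so your substitution of the explicit $g_2$ densities into \eqref{eq:PDF_K} followed by completing the square is precisely the implied argument. The arithmetic with the normalizing constants checks out in both cases.
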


\subsection{Threshold Determination}

From the above two subsections, it is seen that for a given $K$ and
when $N \to \infty$, the \emph{regulated test statistics},
$\sqrt{N}(T^{(i)} ({\bf X}) -1)$ converges to a random variable with
PDF $f_{K}^{(i)}(x)$ and CDF $F_{K}^{(i)}(x)$, where $i=m$ for MED and $i=c$ for CND.
Here, we look at how to set the decision threshold $\epsilon$ to
achieve a target probability of false alarm, $\bar{P}_{f}$. Since
\begin{eqnarray}
\bar{P}_{f} & = & \mbox{Prob}~\left ( T^{(i)}({\bf X}) > \epsilon | \mathcal{S}_{0} \right ) \nonumber \\
& = & \mbox{Prob}~\left ( \sqrt{N} \left ( T^{(i)}({\bf X})-1 \right ) > \sqrt{N} (\epsilon -1) | \mathcal{S}_{0} \right ) \nonumber \\
& = & 1 - \int_{0}^{\sqrt{N} (\epsilon -1)} f_{K}^{(i)} (t) dt \nonumber \\
& = & 1 - F_{K}^{(i)} (\sqrt{N} (\epsilon -1)).
\end{eqnarray}
Thus the decision threshold is determined by
\begin{eqnarray}
\epsilon = 1 + \frac{1}{\sqrt{N}} [F_{K}^{(i)}]^{-1}(1-\bar{P}_{f}),
\end{eqnarray}
where $x=[F_{K}^{(i)}]^{-1}(y)$ denotes the inverse function of
$y=F_{K}^{(i)} (x)$.

\section{Asymptotic Distributions of the Test Statistics under Scenario $\mathcal{S}_{1}$}
\label{sec:asym-2}

In this section, we derive the asymptotic distributions of the test
statistics under $\mathcal{S}_{1}$. Consider the ordered eigenvalues
of covariance matrix ${\bf R}_{x}$, $\rho_{1} \geq \rho_{2} \geq
\cdots \geq \rho_{K}$, and the ordered eigenvalues of sample
covariance matrix $\hat{{\bf R}}_{x}$, $\hat{\lambda}_{1} \geq
\hat{\lambda}_{2} \geq \cdots \geq \hat{\lambda}_{K}$. With the same notation as in \cite{an1}, let the multiplicities of the
eigenvalues of ${\bf R}_{x}$ be $q_1,q_2,\cdots,q_r$. That is
\begin{eqnarray}
&& \rho_1=\cdots=\rho_{q_1}=\mu_1,\nonumber\\
&& \rho_{q_1+1}=\cdots=\rho_{q_{1}+q_2-1}=\mu_2, \nonumber\\
&& \rho_{K-q_r+1}=\cdots=\rho_{K}=\mu_r,\nonumber
\end{eqnarray}
where
$$
\mu_1>\mu_{2}>\cdots>\mu_{r}.
$$

Moreover, let
\begin{eqnarray}
&& {\bf{\Gamma}}^H {\bf{R}}_{x} {\bf{\Gamma}} = {\bf{\Delta}},\nonumber \\
&& \bbE^H {\bf{\Gamma}}^H \hat{\bf{R}}_{x} {\bf{\Gamma}} \bbE =
\bbB,\nonumber
\end{eqnarray}
where $${\bf{\Delta}}=\mbox{diag}(\rho_1,\rho_2,\cdots,\rho_K),$$
$$\bbB=\mbox{diag}(\hat{\lambda}_1,\hat{\lambda}_2,\cdots,\hat{\lambda}_K),$$ and
$\bf{\Gamma}$ and $\bbE$ are corresponding eigenvector matrices. We
then partition the matrices ${\bf{\Delta}}$ and $\bbB$ into block
matrices as follows:
\begin{eqnarray}
\bf{\Delta}=\begin{pmatrix}
 \mu_1\bbI& 0& \cdots &0\\
         0&\mu_2\bbI&\cdots &0\\
          \vdots&\vdots&\ &\vdots\\
          0&0&\cdots &\mu_r\bbI
\end{pmatrix},
\end{eqnarray}
\begin{eqnarray}
\bbB=\begin{pmatrix}
 \bbB_1& 0& \cdots &0\\
         0&\bbB_2&\cdots &0\\
          \vdots&\vdots&\ &\vdots\\
          0&0&\cdots &\bbB_r
\end{pmatrix}.
\end{eqnarray}
Note in both ${\bf{\Delta}}$ and $\bbB$, the $k$th diagonal
sub-block is $q_{k} \times q_{k}$ matrix.

Following the proofs of Proposition 1 and Proposition 2, and Theorem
1 of \cite{an1}, we have the following results.

\begin{theorem}
\label{theorem:H1}
The limiting
distribution of $\sqrt{N} \left ( \frac{\bbB_k}{\mu_k}-\bbI \right)$
is given by \eqref{eq:a1} and \eqref{eq:a2} for real-valued case and
complex-valued case, respectively, with the parameter $K$ being
replaced by $q_{k}$. Furthermore, $\sqrt{N} \left (
\frac{\bbB_i}{\mu_i}-\bbI \right)$  is asymptotically independent of
$\sqrt{N} \left ( \frac{\bbB_j}{\mu_j}-\bbI \right)$  for $i \neq
j$.
\end{theorem}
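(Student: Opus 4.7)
The plan is to work in the eigenbasis of the true covariance matrix $\bbR_x$, reduce the problem to the analysis of a Gaussian limit matrix with block structure matching the multiplicities $q_1, \ldots, q_r$, and then apply degenerate matrix perturbation theory to lift entrywise fluctuations to eigenvalue fluctuations block by block.

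First I would define $\bbW_N \stackrel{\Delta}{=} \boldsymbol{\Gamma}^H \hat{\bbR}_x \boldsymbol{\Gamma}$, so that $\bbW_N \stackrel{a.s.}{\longrightarrow} \boldsymbol{\Delta}$ as $N \to \infty$. Since the rotated samples $\boldsymbol{\Gamma}^H \bbx(n)$ are Gaussian with diagonal covariance $\boldsymbol{\Delta}$, one may assume without loss of generality that $\bbR_x = \boldsymbol{\Delta}$. Lindeberg's CLT, applied entrywise to the centered second-moment sums, then yields
$$\bbZ_N \stackrel{\Delta}{=} \sqrt{N}(\bbW_N - \boldsymbol{\Delta}) \stackrel{D}{\longrightarrow} \bbZ,$$
where $\bbZ$ is a symmetric (real case) or Hermitian (complex case) Gaussian matrix. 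A direct covariance computation using products of independent Gaussian coordinates shows that the $k$-th diagonal block $\bbZ_{kk}$, of size $q_k \times q_k$, is distributed as $\mu_k$ times a standard GOE (respectively GUE) matrix; the off-diagonal blocks have independent Gaussian entries; and distinct diagonal blocks $\bbZ_{kk}$ and $\bbZ_{jj}$ for $k \neq j$ are mutually independent.

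The second step is to transfer this Gaussian limit from $\bbW_N$ to its eigenvalues. Writing $\bbW_N = \boldsymbol{\Delta} + N^{-1/2} \bbZ_N$, degenerate perturbation theory shows that, for each repeated eigenvalue $\mu_k$ (separated from the other $\mu_j$ by a positive gap), the $q_k$ sample eigenvalues clustering near $\mu_k$ equal $\mu_k$ plus $N^{-1/2}$ times the eigenvalues of $\bbZ_{kk}$, up to an $O_p(N^{-1})$ remainder, because off-diagonal blocks of $\bbZ_N$ are suppressed by the spectral gap and contribute only at the $N^{-1}$ scale. Consequently $\sqrt{N}(\bbB_k/\mu_k - \bbI)$ converges in distribution to the diagonal matrix of ordered eigenvalues of $\bbZ_{kk}/\mu_k$, a GOE or GUE of size $q_k$. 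Its joint ordered-eigenvalue density is exactly \eqref{eq:a1} or \eqref{eq:a2} with $K$ replaced by $q_k$, giving the first claim, and asymptotic independence across $k$ follows from the independence of the diagonal blocks $\bbZ_{kk}$ inherited from the Gaussian limit.

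The main obstacle is the degenerate perturbation step: establishing that off-diagonal blocks of $\bbZ_N$ affect the eigenvalue fluctuations associated with a repeated eigenvalue $\mu_k$ only at order $N^{-1}$. This is the technical heart of Theorem 1 in \cite{an1}, so my plan is to invoke that theorem directly rather than redo the perturbation expansion by hand. A secondary point is to upgrade marginal block convergences to joint convergence across all $k$, which is immediate from the fact that the full matrix $\bbZ_N$ converges jointly to the single Gaussian matrix $\bbZ$.
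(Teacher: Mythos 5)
Your proposal is correct and follows essentially the same route as the paper: the paper's entire argument for this theorem is to invoke Theorem~1 of \cite{an1} (Anderson's asymptotic theory for principal components), and your plan reconstructs exactly that argument --- rotation to the eigenbasis of $\bbR_x$, a CLT giving a block-structured Gaussian limit with independent diagonal blocks of GOE/GUE type scaled by $\mu_k$, and degenerate perturbation theory to pass from matrix-entry fluctuations to eigenvalue fluctuations --- while deferring the same technical perturbation step to Anderson's theorem. The added detail (the explicit covariance computation showing $\bbZ_{kk}$ is $\mu_k$ times a GOE/GUE and the observation that joint convergence of the full matrix yields asymptotic independence across blocks) is consistent with, and a useful elaboration of, what the paper leaves implicit.
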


Let $\gamma_{1} = \sqrt{N} (
\frac{\hat{\lambda}_{1}}{\mu_{1}} -1)$ and $\gamma_{K} =
\sqrt{N}(\frac{\hat{\lambda}_{K}}{\mu_{r}} -1)$. From Theorem \ref{theorem:H1}, $\gamma_{1}$ and $\gamma_{K}$ are asymptotically independent, and the limiting
distributions of $\gamma_{1}$ and $\gamma_{K}$ are $\bar{g}_{q_{1},1}(x)$ and $\bar{g}_{q_{r},q_{r}}(x)$, respectively, which are defined in Section \ref{sec:asym-1}.

\subsection{MED}
For MED, $T^{(m)}({\bf X}) =
\frac{\hat{\lambda}_{1}}{\sigma_{u}^{2}}$, thus
$\frac{\hat{\lambda}_{1}}{\mu_1} = \alpha^{(m)} T^{(m)}({\bf X})$ where
$\alpha^{(m)} = \frac{\sigma_{u}^{2}}{\mu_{1}}$. We have the following
theorem.
\begin{theorem}
Under $\mathcal{S}_{1}$, when $N \to \infty$,
\begin{eqnarray}
\sqrt{N} \left ( \alpha^{(m)} T^{(m)}({\bf X}) -1 \right )
\stackrel{D}\longrightarrow m_{K}^{(1)},
\end{eqnarray}
whose PDF is $\bar{g}_{q_{1},1}(x)$.
\end{theorem}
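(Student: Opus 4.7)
The plan is to reduce the theorem to a direct corollary of Theorem \ref{theorem:H1} by first rewriting the rescaled test statistic as the quantity $\gamma_{1}$ introduced just above the theorem, and then appealing to a continuous-mapping argument for the top eigenvalue of the first diagonal block $\bbB_{1}$.

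The first step is purely algebraic. By the definitions of $T^{(m)}(\bbX)$ and $\alpha^{(m)}$,
\begin{equation*}
\alpha^{(m)} T^{(m)}(\bbX) \;=\; \frac{\sigma_{u}^{2}}{\mu_{1}}\cdot\frac{\hat{\lambda}_{1}}{\sigma_{u}^{2}} \;=\; \frac{\hat{\lambda}_{1}}{\mu_{1}},
\end{equation*}
so $\sqrt{N}(\alpha^{(m)}T^{(m)}(\bbX)-1)=\gamma_{1}$ exactly. Thus it suffices to prove that $\gamma_{1}$ converges in distribution to a random variable with density $\bar{g}_{q_{1},1}(x)$.

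The second step identifies $\hat{\lambda}_{1}$ with the top eigenvalue of $\bbB_{1}$ for all sufficiently large $N$. Since $\hat{\bbR}_{x}\stackrel{a.s.}\longrightarrow\bbR_{x}$ and the eigenvector matrix is a fixed unitary, each diagonal sub-block satisfies $\bbB_{k}\stackrel{a.s.}\longrightarrow\mu_{k}\bbI$. Combining the strict spectral gap $\mu_{1}>\mu_{2}>\cdots>\mu_{r}$ with Weyl's inequality at the $O(1/\sqrt{N})$ perturbation scale implied by Theorem \ref{theorem:H1} shows that, almost surely for all large $N$, the top $q_{1}$ eigenvalues of $\hat{\bbR}_{x}$ are exactly the eigenvalues of $\bbB_{1}$; in particular $\hat{\lambda}_{1}$ equals the largest eigenvalue of $\bbB_{1}$.

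The final step applies Theorem \ref{theorem:H1}: $\sqrt{N}(\bbB_{1}/\mu_{1}-\bbI)$ converges in distribution to a random matrix whose ordered eigenvalues $\beta_{1}\geq\cdots\geq\beta_{q_{1}}$ have joint density obtained from \eqref{eq:a1} or \eqref{eq:a2} by replacing $K$ with $q_{1}$. Since the top-eigenvalue functional is continuous in the matrix entries, the continuous-mapping theorem together with the identification in the previous step yields $\gamma_{1}\stackrel{D}\longrightarrow\beta_{1}$, whose marginal density is $\bar{g}_{q_{1},1}(x)$ by definition. The main obstacle is the middle step: one has to rule out that an eigenvalue from a lower block $\bbB_{k}$ with $k\geq 2$ ever surpasses those of $\bbB_{1}$ at the $1/\sqrt{N}$ scale. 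This is a routine eigengap argument because $\mu_{1}-\mu_{2}$ is a strictly positive constant while all intra-block fluctuations are $O_{p}(1/\sqrt{N})$, but it is the only non-trivial piece of the proof.
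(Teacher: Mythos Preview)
Your proposal is correct and follows the same route as the paper, which treats the theorem as an immediate consequence of Theorem~\ref{theorem:H1} together with the algebraic identity $\sqrt{N}(\alpha^{(m)}T^{(m)}(\bbX)-1)=\gamma_{1}$. Your middle step (the eigengap argument showing $\hat{\lambda}_{1}$ is the top eigenvalue of the first block) is not wrong, but note that in the paper's setup $\bbB_{1}$ is \emph{defined} as the diagonal matrix $\mbox{diag}(\hat{\lambda}_{1},\ldots,\hat{\lambda}_{q_{1}})$, so the identification is built into the construction and into Anderson's theorem underlying Theorem~\ref{theorem:H1}; the separation argument you sketch is precisely what makes that theorem work, but it need not be reproved here.
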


If there are no repeated maximum eigenvalues (i.e., $q_{1} = 1$), $m_{K}^{(1)} \sim \mathcal{N}
(0,2)$ for real-valued case, and $m_{K}^{(1)} \sim \mathcal{N}
(0,1)$ for complex-valued case.

\subsection{CND}

For CND, $T^{(c)}({\bf X}) = \frac{\hat{\lambda}_{1}}{\hat{\lambda}_{K}}$.
Since $\gamma_{1}$ and $\gamma_{K}$ are asymptotically independent,
and when $N \to \infty$, $\frac{\hat{\lambda}_{1}}{\mu_{1}}
\stackrel{a.s.}\longrightarrow 1$,
$\frac{\hat{\lambda}_{K}}{\mu_{r}}
\stackrel{a.s.}\longrightarrow 1$, similar to Theorem 2, we have the
following theorem.
\begin{theorem}
Under $\mathcal{S}_{1}$, for a fixed $K$ and when $N \to \infty$,
\begin{eqnarray}
\sqrt{N} \left ( \alpha^{(c)} T^{(c)}({\bf X}) -1 \right
)\stackrel{D}\longrightarrow c_K^{(1)},
\end{eqnarray}
where $\alpha^{(c)} = \frac{\mu_{r}}{\mu_{1}}$, and the PDF of $c_K^{(1)}$ is given by
\begin{eqnarray}
w_K^{(c)} (x) & = & \int^{\infty}_{-\infty} \bar{g}_{1,1}(x+y)
\bar{g}_{q_{r},q_{r}}(y) dy.
\end{eqnarray}
\end{theorem}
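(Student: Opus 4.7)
The plan is to reduce $\sqrt{N}\bigl(\alpha^{(c)} T^{(c)}({\bf X}) - 1\bigr)$ to a smooth function of the pair $(\gamma_{1},\gamma_{K})$ already characterized in Theorem~\ref{theorem:H1}, and then finish with Slutsky's theorem plus an elementary convolution. First I would rewrite the rescaled test statistic by factoring $\mu_{1}$ and $\mu_{r}$ out of numerator and denominator:
\begin{equation*}
\alpha^{(c)} T^{(c)}({\bf X}) \;=\; \frac{\mu_{r}}{\mu_{1}}\cdot\frac{\hat{\lambda}_{1}}{\hat{\lambda}_{K}} \;=\; \frac{\hat{\lambda}_{1}/\mu_{1}}{\hat{\lambda}_{K}/\mu_{r}} \;=\; \frac{1+\gamma_{1}/\sqrt{N}}{1+\gamma_{K}/\sqrt{N}}.
\end{equation*}
A one-line algebraic manipulation then yields the key identity
\begin{equation*}
\sqrt{N}\bigl(\alpha^{(c)} T^{(c)}({\bf X}) - 1\bigr) \;=\; \frac{\gamma_{1}-\gamma_{K}}{1+\gamma_{K}/\sqrt{N}},
\end{equation*}
which exhibits the regulated statistic as an explicit continuous functional of $(\gamma_{1},\gamma_{K})$.

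Next I would invoke Theorem~\ref{theorem:H1}, which simultaneously supplies (a) the marginal limits of $\gamma_{1}$ and $\gamma_{K}$, with densities $\bar{g}_{q_{1},1}$ and $\bar{g}_{q_{r},q_{r}}$ respectively, and (b) their asymptotic independence, since $\hat{\lambda}_{1}$ and $\hat{\lambda}_{K}$ belong to the blocks $\bbB_{1}$ and $\bbB_{r}$ associated with the distinct eigenvalues $\mu_{1}\neq \mu_{r}$. Because each $\gamma$ is $O_{p}(1)$, the denominator $1+\gamma_{K}/\sqrt{N}\stackrel{P}{\to}1$, so Slutsky's theorem combined with the continuous mapping theorem applied to the subtraction map on the jointly convergent pair $(\gamma_{1},\gamma_{K})$ gives
\begin{equation*}
\sqrt{N}\bigl(\alpha^{(c)} T^{(c)}({\bf X}) - 1\bigr) \;\stackrel{D}{\longrightarrow}\; \gamma_{1}^{\infty} - \gamma_{K}^{\infty},
\end{equation*}
with $\gamma_{1}^{\infty}$ and $\gamma_{K}^{\infty}$ independent and distributed as above. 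The final step is just the standard convolution formula for the density of a difference of two independent random variables, which immediately produces $w_{K}^{(c)}(x)=\int_{-\infty}^{\infty}\bar{g}_{q_{1},1}(x+y)\bar{g}_{q_{r},q_{r}}(y)\,dy$.

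The main obstacle, as I see it, is really just making sure one uses \emph{joint} rather than merely marginal convergence of $(\gamma_{1},\gamma_{K})$, since the continuous mapping step would otherwise be unjustified; fortunately Theorem~\ref{theorem:H1} already delivers joint convergence together with asymptotic independence across blocks, so no extra work is needed. Everything else is a one-step delta-method style computation and a routine change of variables, in complete parallel to the proof template used for Theorem~2 under $\mathcal{S}_{0}$, with the single simplification that independence collapses the joint density $\tilde{g}_{K}(\beta_{1},\beta_{K})$ into the product form appearing inside the convolution.
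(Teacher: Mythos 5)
Your argument is correct and is essentially the paper's own route: the paper likewise reduces the statistic to the pair $(\gamma_{1},\gamma_{K})$, invokes the asymptotic independence and marginal limits supplied by Theorem~\ref{theorem:H1}, uses $\hat{\lambda}_{1}/\mu_{1}\to 1$ and $\hat{\lambda}_{K}/\mu_{r}\to 1$ to kill the denominator, and concludes with the convolution for a difference of independent variables. One small point in your favor: your kernel $\bar{g}_{q_{1},1}(x+y)$ is the correct general form, whereas the theorem as printed writes $\bar{g}_{1,1}(x+y)$, which agrees with yours only when $q_{1}=1$ and appears to be a typo given that the paper treats the special case $q_{1}=q_{r}=1$ separately afterwards.
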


Now let us look at the case when $q_{1} = q_{r} = 1$. Since both $\gamma_{1}$ and $\gamma_{K}$ follow the same Gaussian distribution asymptotically, we can easily derive that
$c_{K}^{(1)} \sim \mathcal{N}
(0,4)$ for real-valued case, and $c_{K}^{(1)} \sim \mathcal{N}
(0,2)$ for complex-valued case.

\section{Computer Simulations}
\label{sec:evaluations}

Computer simulations are presented in this section to validate the
effectiveness of the results obtained in this paper. We choose $K=2$, and there is one primary signal under $\mathcal{S}_{1}$.
$10000$ Monte Carlo runs are carried out in order to compute CDF of
the test statistics.

\subsection{Scenario $\mathcal{S}_{0}$}

We first compare the CDFs of the regulated test statistics
$\sqrt{N} (T^{(m)}( {\bf X}) -1)$ for MED derived using computer
simulations, and theoretic analysis under fixed $K$ assumption of this paper and large $K$
assumption of \cite{Johnstone2001}. The results for real-valued case with $N = 1000$ are shown in Fig. \ref{fig:fig1}. It is seen that
the simulated CDF and theoretical CDF derived in this paper are
close to each other, while the result derived under large $K$
assumption are far from the simulated one.

\begin{figure}
    \centering
        \includegraphics[width = 0.5\textwidth]{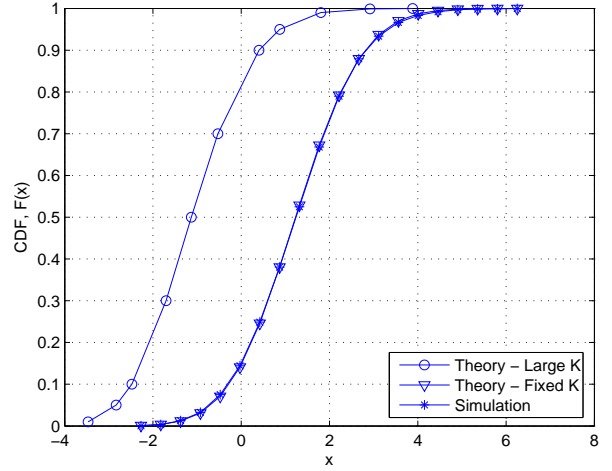}
    \caption{Simulated and theoretical CDFs of the regulated test statistics under $\mathcal{S}_{0}$: Real-valued case, MED with $N=1000$.}
    \label{fig:fig1}
\end{figure}

We next evaluate the accuracy of threshold setting for the CND
detection using the formula in Section IV. Fig. \ref{threshold1}
shows the threshold values at different probability of false alarms
for real-valued case with $N=10000$. For comparison, the true thresholds (based on
simulations) and the thresholds by using the theory for large $K$ in
\cite{Zeng_Liang_TCOM09} are also included in the figure. The
proposed theory based on fixed $K$ is much more accurate than the
theory in \cite{Zeng_Liang_TCOM09} and gives the threshold values
approaching to the true values.

\begin{figure}
\centering
\includegraphics[width=8cm]{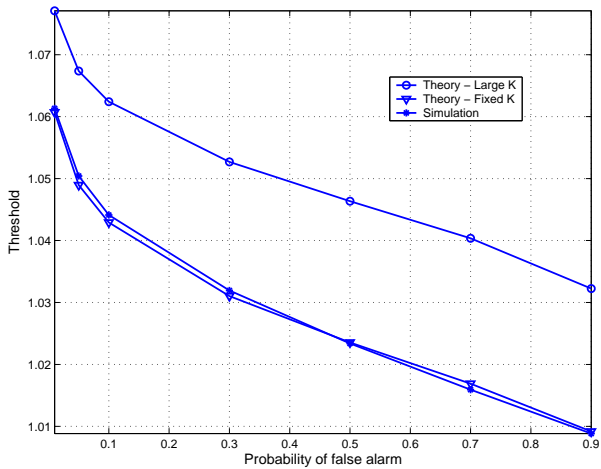}
\caption{Comparison of threshold values under $\mathcal{S}_{0}$: Real-valued case, CND with $N=10000$.}\label{threshold1}
\end{figure}

\subsection{Scenario $\mathcal{S}_{1}$}

We then compare the detection probability results predicted by the formulas derived in this paper and reference \cite{Zeng_Liang_TCOM09} with large $K$ assumption.
The probabilities of detection for CND by using simulations and different
theoretical formulas are shown in Fig. \ref{pd1} with $N=10000$ and SNR = $-15$dB (real-valued case). For a target probability of false alarm, we first use simulations to determine the decision threshold, then obtain the probability of detection based on simulations or the related formulas. Again, the
proposed formula based on fixed $K$ is much more accurate than the
formula in \cite{Zeng_Liang_TCOM09} and gives the values approaching
to the true values. From Fig. \ref{pd1}, it is also seen that, interestingly, for a target probability of false alarm, the probability of detection predicted by our method tends to be conservative, which seems to be good to primary users in terms of protection requirement. Finally, as the results derived in this paper are asymptotic, they will approach the simulated ones more accurately when the sample size increases.

\begin{figure}
\centering
\includegraphics[width=8cm]{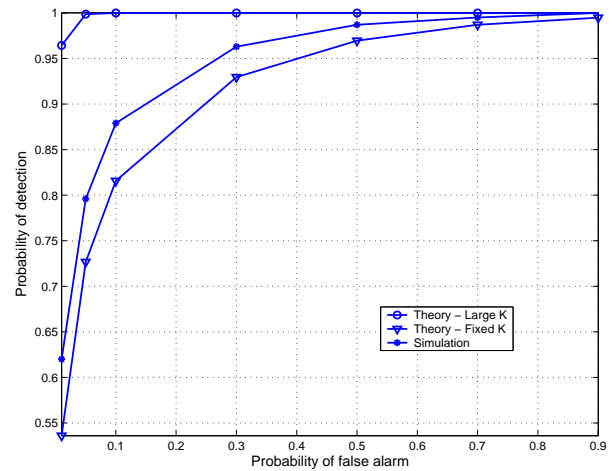}
\caption{Comparison of estimations for probability of detection
under $\mathcal{S}_{1}$: Real-valued case, CND with $N=10000$ and SNR = $-15$dB.}\label{pd1}
\end{figure}

\section{Conclusions}
\label{sec:conclusions} In this paper, theoretic distributions of
the test statistics for some eigenvalue based detections have been
derived for any fixed $K$ but large $N$, from which the probability
of detection and probability of false alarm have been obtained. The
results are useful in practice to determine the detection thresholds
and predict the performances of the sensing algorithms. Extensive simulations have shown that
theoretic results have higher accuracy than existing stuties.

\end{document}